\def\sh{{\, \mathcyr{SH}\,  }}
\def\shap{{\, \mathcyr{sh} \, }}
\def\ss{{\rm ss}}
\def\ssr{{\rm ssr}}
\def\bd{{\rm bd}}
\def\bdi{{\rm bdi}}
\def\bdr{{\rm bdr}}
\def\bdir{{\rm bdir}}
\def\pss{{\rm pss}}
\def\pssr{{\rm pssr}}
\def \mod#1 #2{#1\ ({\rm mod}\ #2)}
\def\odd{{\rm odd}}
\def\even{{\rm even}}
\def\fh{{\rm fh}}
\def\lh{{\rm lh}}
\theoremstyle{plain}
\newtheorem{theorem}{Theorem}
\newtheorem{corollary}[theorem]{Corollary}
\newtheorem{lemma}[theorem]{Lemma}
\theoremstyle{definition}
\newtheorem{openproblem}[theorem]{Open Problem}
\theoremstyle{remark}
\newtheorem{remark}[theorem]{Remark}
\title{Shuffling and Unshuffling}
\author{
Dane Henshall\\
School of Computer Science\\
University of Waterloo\\
Waterloo, ON  N2L 3G1\\
Canada\\
\url{dslhensh@uwaterloo.ca}\\
\and
Narad Rampersad\\
Department of Mathematics\\
University of Li\`ege\\
Grande Traverse, 12 (Bat. B37)\\
4000 Li\`ege\\
Belgium\\
\url{narad.rampersad@gmail.com} \\
\and
Jeffrey Shallit\\
School of Computer Science\\
University of Waterloo\\
Waterloo, ON  N2L 3G1\\
Canada\\
\url{shallit@cs.uwaterloo.ca}
}
\begin{document}
\maketitle

\begin{abstract}
We consider various shuffling and unshuffling operations on languages and words,
and examine their closure properties.  Although the main goal is to provide
some good and novel exercises and examples for undergraduate formal language
theory classes, we also
provide some new results and some 
open problems.
\end{abstract}

\section{Introduction}

    Two kinds of shuffles are commonly studied:  perfect shuffle and
ordinary shuffle.

    For two words $x = a_1 a_2 \cdots a_n$, $y = b_1 b_2 \cdots b_n$ of
the same length, we define their {\em perfect shuffle}
$x \shap y = a_1 b_1 a_2 b_2 \cdots a_n b_n$.  For example,
${\tt term} \shap {\tt hoes} = {\tt theorems}$.  
Note that $x \shap y$ need not equal $y \shap x$.  
This definition is
extended to languages as follows:
$$ L_1 \shap L_2 = \bigcup_{{x \in L_1,\ y\in L_2}\atop
	{|x| = |y|}} \lbrace x \shap y \rbrace .$$
If $x^R$ denotes the reverse of $x$, then note that 
$(x \shap y)^R = y^R \shap x^R$.

It is sometimes useful to allow $|y| = |x|+1$, where
$x = a_1 \cdots a_n$, $y = b_1 \cdots b_{n+1}$, in which case we define
$x \shap y = a_1 b_1 \cdots a_n b_n b_{n+1}$.

     The {\em ordinary shuffle} $x \sh y$ of two words is a finite
{\em set}, the set
of words obtainable from merging the words $x$ and $y$ from left to
right, but choosing the next symbol arbitrarily from $x$ or $y$.
More formally,
\begin{multline*}
x \sh y = \lbrace z \ : \ z = x_1 y_1 x_2 y_2 \cdots x_n y_n 
\text{ for some $n \geq 1$ and } \\
\text{words $x_1, \ldots, x_n, y_1, \ldots, y_n$
such that } x = x_1 \cdots x_n \text{ and } y = y_1 \cdots y_n \rbrace.
\end{multline*}
This definition is symmetric, and $x \sh y = y \sh x$.  The definition
is extended to languages as follows:
$$ L_1 \sh L_2 = \bigcup_{x \in L_1,\ y\in L_2} (x \sh y).$$

(As a mnemonic, the symbol $\sh$ is larger than $\shap$ in size, and
similarly $\sh$ generally produces a set larger in cardinality than
$\shap$.)

As is well-known, the shuffle (resp., perfect shuffle) of two regular
languages is regular, and the shuffle (resp., perfect 
shuffle) of a context-free language
with a regular language is context-free.  Perhaps the easiest way
to see all these results
is by using morphisms and inverse morphisms, and relying on the 
known closure properties of these transformations, as follows:

If $L_1, L_2 \subseteq \Sigma^*$, create a new alphabet $\Sigma'$ by
putting primes on all the letters of $\Sigma$.  Define
$h_1 (a) = h_2(a') = a$ and $h_1 (a') = h_2(a) = \epsilon$ for $a \in \Sigma$.
Define $h(a) = h(a') = a$ for $a \in \Sigma$.
Then 
$$L_1 \sh L_2 = h( h_1^{-1} (L_1) \ \cap \ h_2^{-1} (L_2) ).$$

In a similar way, 
$$ L_1 \shap L_2 = h( h_1^{-1}(L_1) \ \cap \ h_2^{-1} (L_2) \ \cap
	(\Sigma \Sigma')^* ).$$

However, the shuffle
(resp., perfect shuffle) of two context-free languages need not be
context-free.  For example, if $L_1 = \lbrace a^m b^m \ : \ m \geq 1 \rbrace$
and $L_2 = \lbrace c^n d^n \ : \ n \geq 1 \rbrace$, then
$L := L_1 \sh L_2$ is not a CFL.  If it were, then
$L \ \cap \ a^+ c^+ b^+ d^+ = \lbrace a^m c^n b^m d^n \ : \ m, n \geq 1
\rbrace$ would be a CFL, which it isn't (via the pumping lemma).

Similarly, if $L_3 = \lbrace a^m b^{2m} \ : \ m \geq 1 \rbrace$
and $L_4 = \lbrace a^{2n} b^n \ : \ n \geq 1 \rbrace$, then
$L_3 \shap L_4 = \lbrace a^{2n} (ba)^n b^{2n} \ : \ n \geq 1 \rbrace$,
which is clearly  not a CFL.

For these, and other facts, see \cite{Berstel:1979}.

\section{Self-shuffles}

Instead of shuffling languages together, we can take a language and
shuffle (resp., perfect shuffle) each word with itself.
Another variation is to shuffle each
word with its reverse.  This gives four different transformations
on languages, which we call self-shuffles:

\begin{eqnarray*}
\ss(L) &=& \bigcup_{x \in L} \{ x \sh x \}  \\
\pss(L) &=& \bigcup_{x \in L} x \shap x \\
\ssr(L) &=& \bigcup_{x \in L} \{ x \sh x^R \}  \\
\pssr(L) &=& \bigcup_{x \in L} x \shap x^R  .
\end{eqnarray*}

We would like to understand how these transformations affect regular
and context-free languages.  We obtain some results, but other questions
are still open.

\begin{theorem}
If $L$ is regular, then
$\ss(L)$ need not be context-free.
\end{theorem}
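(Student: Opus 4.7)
The plan is to exhibit a specific regular language $L$ whose self-shuffle is not context-free, specifically $L = a^* b^*$. Since the class of context-free languages is closed under intersection with regular languages, it suffices to show that $L' := \ss(L) \cap a^* b^* a^* b^*$ is not context-free. My approach has two stages: first pin down $L'$ combinatorially, then apply the context-free pumping lemma to a carefully chosen witness.

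For the characterization, I would analyse how two copies of $x = a^i b^j$ can be shuffled to produce a word of the form $z = a^p b^q a^r b^s$ (forcing $p+r = 2i$ and $q+s = 2j$). The key local observation is that within each copy the $a$'s must all precede the $b$'s, so (i) any $b$ lying in the first $b$-run forces its copy's $a$-block to sit entirely in the first $a$-run, and (ii) any $a$ lying in the second $a$-run forces its copy's $b$-block to sit entirely in the second $b$-run. A short case analysis on which copy contributes which block then shows that, when all four exponents are positive, the shuffle is realizable iff
$$
p \geq r, \ s \geq q, \ p + r \equiv 0 \pmod 2, \ q + s \equiv 0 \pmod 2.
$$
The degenerate cases $q = 0$ or $r = 0$ collapse the middle and only add words of the form $a^{2i} b^{2j}$, which are harmless for what follows.

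For the pumping step I would take the witness $w_n = a^n b^n a^n b^n$ with $n$ a large even integer, so $w_n \in L'$. Any admissible decomposition $w_n = uvxyz$ confines $vxy$ to at most two adjacent runs; if $v$ or $y$ straddles a run boundary then pumping up with $i = 2$ introduces an extra $ab$-alternation and leaves $a^* b^* a^* b^*$ altogether, whereas if $v$ and $y$ each lie inside a single run then an appropriate choice of $i \in \{0, 2\}$ always destroys one of the inequalities $p \geq r$, $s \geq q$ (for example, $i = 0$ inside the first $a$-run forces $p < r$, while $i = 2$ inside the second $a$-run forces $r > p$). Hence $L'$ is not a CFL, and neither is $\ss(L)$.

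The main obstacle is the characterization step: I must verify that $p \geq r$ and $s \geq q$ are really \emph{both} necessary and sufficient, and that the degenerate decompositions do not secretly smuggle extra words into $L'$ that would undo the pumping argument. Once the inequalities are pinned down the pumping itself is essentially routine, precisely because $w_n$ is perfectly symmetric across its four runs, so every local perturbation inside a run instantly violates one of the inequalities or a parity, and every cross-boundary perturbation breaks the $a^* b^* a^* b^*$ shape.
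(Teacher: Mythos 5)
Your construction is correct, but it is genuinely different from the paper's. The paper proves the stronger statement that $\ss(\lbrace 0,1\rbrace^*)$ itself is not context-free: it intersects with the regular set $\lbrace 01^{a}0^{b+1}1^{c+1}0^{d}1 \rbrace$, shows that the structure of such a word forces the unshuffled word to be exactly $01^a0^b1$, and thereby pins the intersection down to the copy-like language $\lbrace 01^{n}0^{m}(01)1^{n}0^{m}1 \ : \ m,n\geq 1\rbrace$, which fails the pumping lemma. You instead take the much more structured witness $L=a^*b^*$, for which $\ss(L)$ admits a clean arithmetic characterization ($a^pb^qa^rb^s$ with $p\geq r$, $s\geq q$, and $p+r$, $q+s$ even, plus the degenerate words $a^{2i}b^{2j}$); I checked both directions of that characterization and it holds, and your pumping argument on $a^nb^na^nb^n$ goes through, with the one caveat that the direction of pumping must be chosen per decomposition (e.g.\ when $v$ is empty and $y$ sits in the first $b$-run with $|y|$ even, pumping down stays inside the language and you must pump up instead --- your sketch anticipates this but the case analysis should be written out). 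The trade-off: your route is arguably more elementary, since the membership condition is a pair of inequalities and parities rather than an exact equality of exponents, but it establishes only the theorem as literally stated; the paper's choice of $L=\Sigma^*$ is what connects this theorem to Open Problem~\ref{ope} on the complexity of recognizing $\ss(\lbrace 0,1\rbrace^*)$ and to Theorem~\ref{three}, so your example could not substitute for it elsewhere in the paper.
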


\begin{proof}  
We show that $\ss(\lbrace 0,1 \rbrace^*)$ is not a CFL.  Suppose it is,
and consider 
$L' = \ss(\{0,1\}^{*}) \cap R$, where
$R=\{ 01^{a}0^{b+1}1^{c+1}0^{d}1 \ : \ a,b,c,d \geq 1 \}$.
Since $R$ is regular, it suffices to 
show that $L'$ is not context-free. 

Now consider an arbitrary word $w \in L'$. 
Then $w=01^{a}0^{b+1}1^{c+1}0^{d}1$ for some $a,b,c,d \geq 1$, and there exists a $y \in \{0,1\}^{*}$ such that $w \in y \sh y$.
The structure of $w$ allows us to determine $y$.
Let $y_1$ and $y_2$ be copies of $y$ such that $w \in y_1 \sh y_2$,
and the first letter of $w$ is taken from $y_1$.

The first symbol of $y$ is evidently
$0$. It follows that the prefix $01^a$ of $w$ is taken entirely from $y_1$, 
since the $0$ is taken from $y_1$ by definition and the first symbol of $y_2$ is 0. Therefore $01^a$ is a prefix of $y_1$.

It follows that $y_2$ also contains $01^a$ as a prefix,
and since $a \geq 1$ this is only possible if the first $0$ of $y_2$
is located in the $0^{b+1}$ block of $w$.
Otherwise, $y_2$ would be a subsequence of $0^{d}1$ and $y_1$ would have $01^{a}0^{b+1}1^{c+1}$ as a prefix (implying that $y_1 \neq y_2$).
Furthermore, the second symbol of $y_2$ being $1$ implies that exactly one of the 0's in the $0^{b+1}$ block is from $y_2$. 
Thus the rest are from $y_1$ and $01^{a}0^{b}$ is a prefix of $y_1$.

Note that $y_1$ and $y_2$ both end in $1$, and $w$ ends in $0^{d}1$.
By the same logic as before, we can conclude that $0^{d}1$ is
a suffix of exactly one of them,
and that the other ends in the $1^{c+1}$ block.
Thus $y_2$ contains $0^{d}1$ as a suffix
and $y_1$ ends in the $1^{c+1}$ block (otherwise, $y_1 \neq y_2$).

Finally, since the second last symbol of $y_1$ is $0$
and $y_1$ ends in the $1^{c+1}$ block, we can conclude that $y_1$ contains 
exactly one $1$ from the $1^{c+1}$ block and that $y_1=01^{a}0^{b}1$.
Unshuffling $y_1$ from $w$ yields $y_2 = 01^{c}0^{d}1$. 

Recall that $y_1=y_2$. So,
\[
 y_1=01^{a}0^{b}1 = 01^{c}0^{d}1 = y_2
\]
and since $a,b,c,d \geq 1$ we know that
\[
 a = c \text{ \indent and \indent } b = d.
\]
\\
If $w \in L'$ then
\begin{align*}
 w & =01^{a}0^{b+1}1^{c+1}0^{d}1 \\
   & = 01^{a}0^{d+1}1^{a+1}0^{d}1 \\
   & =01^{a}0^{d}(01)1^{a}0^{d}1 .
\end{align*}

Since $w$ was arbitrary, we have
\begin{align*}
 L' & = \{ 01^{a}0^{b+1}1^{c+1}0^{d}1 : a=c, b=d, \text{ and } a,d \geq 1 \} \\
   & = \{ 01^{n}0^{m}(01)1^{n}0^{m}1 : m,n \geq 1 \} ,
\end{align*}
which is clearly not a CFL, using the pumping lemma.
\end{proof}

\begin{remark}
In a previous version of this paper, proving that
$\ss(\lbrace 0,1 \rbrace^*)$ is not context-free
was listed as an open problem.  After this was solved by D. Henshall,
a solution was
given by Georg Zetzsche independently.
\end{remark}

Similarly, we can show

\begin{theorem}
$L = \bigcup_{w \in \lbrace 0, 1 \rbrace^*}  ( w \sh w \sh w ) $
is not context-free.
\label{three}
\end{theorem}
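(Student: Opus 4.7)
The plan is to adapt the strategy of the proof of Theorem 1: intersect $L$ with a carefully chosen regular language $R$, and show that $L \cap R$ is not context-free via the pumping lemma, from which it will follow that $L$ itself is not context-free.

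A natural candidate for $R$ is a language whose words have a rigid block structure, for instance $R = \{0 1^{a} 0^{b+2} 1^{c+2} 0^{d+2} 1^{e+2} 0^{f} 1 : a,b,c,d,e,f \geq 1\}$. The isolated $0$ at the start and isolated $1$ at the end play the same role as in Theorem 1: they force the underlying word $w$ to start with $0$ and end with $1$, and place the first $0$ (respectively, last $1$) of exactly one of the three copies in $z \in w \sh w \sh w$ at the very start (respectively, end) of $z$. The ``$+2$'' padding on the four middle blocks is chosen so that each of the three copies has room to contribute to every middle block, ruling out degenerate situations in which some copy fails to span a block. Then I would trace through $z$ symbol by symbol and copy by copy, much as in the proof of Theorem 1, to uniquely determine the block structure of $w$ in terms of the exponents of $z$, thereby deriving a system of equations among $a, b, c, d, e, f$. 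The hope is that $L \cap R$ reduces to a language of the form $\{0\, u(m,n)\, v(m,n)\, 1 : m,n \geq 1\}$ with repeated exponents analogous to $a = c$, $b = d$ in Theorem 1, which is then easily ruled out as a CFL by a standard pumping-lemma argument.

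The main obstacle will be the significantly expanded case analysis compared to the two-copy setting. With three copies instead of two, there are many more possible ways for each copy's letters to be distributed among the blocks of $z$, and each such configuration must be analysed or explicitly ruled out. The choice of $R$ — in particular, the number of middle blocks and the amount of padding on each — is crucial: with too little padding, multiple unshufflings become possible; with too few blocks, the resulting equations among the exponents might not suffice to force a non-context-free structure. I would expect to have to experiment with several variants of $R$ before finding one for which the unshuffling is uniquely determined and the intersection $L \cap R$ is clearly non-context-free.
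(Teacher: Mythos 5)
Your proposal is a strategy outline rather than a proof: the step that actually carries all the weight --- showing that for your chosen $R$ the word $w$ and the distribution of its three copies are uniquely determined, so that $L \cap R$ collapses to a set of exponent tuples satisfying linear equations --- is exactly the part you defer, and you concede you do not know whether your particular $R$ (or any nearby variant) makes it work. With three copies this is not a routine extension of Theorem~1. In that proof the isolated leading $0$ pins down one copy and the ``$+1$'' padding forces exactly one symbol of each padded block to belong to the \emph{other} copy; with three copies, two copies remain unpinned after the leading $0$ is assigned, and a ``$+2$'' padding does not force the two extra symbols of a block to be split one-per-remaining-copy, nor does it force the remaining copies to begin in any particular block. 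Until that case analysis is done, there is no language to apply the pumping lemma to, so the proposal as written has a genuine gap at its core.

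The paper avoids this entirely by intersecting with $0^*10^*10^*1$, i.e., restricting to words with exactly three $1$'s. Then each copy of $w$ contains exactly one $1$, and since the word ends in $1$, in fact $w = 0^a 1$; membership of $0^i 1 0^j 1 0^k 1$ reduces to $i+j+k = 3a$, $i \ge a$, $i + j \ge 2a$. No unshuffling analysis is needed. The paper then pumps $s = 0^n 1 0^n 1 0^n 1$ using \emph{Ogden's lemma} with the middle block of $0$'s marked, so that it can control where $v$ and $y$ fall and derive a violation of one of the two inequalities whether one pumps up or down. If you want to rescue your approach, I would suggest adopting at least the first of these ideas: choose $R$ so that the number of $1$'s (or $0$'s) per word is fixed at three, which determines $w$ almost for free; otherwise be prepared to carry out, in full, a multi-page case analysis that your writeup currently only promises.
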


\begin{proof}  
We use Ogden's lemma.  Consider
$$  L =  \lbrace w \sh w \sh w \ : \ w \in \lbrace 0,1 \rbrace^* \rbrace
\cap 0^*10^*10^*1.$$

Pick $s = 0^n 1 0^n 1 0^n 1$ in $L$ to pump.  Write $s = uvxyz$ and mark the
middle block of 0's.  If $v$ begins in the middle block of 0's, then
pump up to obtain $s' = 0^n 1 0^j 1 0^k 1$, where $n < j$ and $n \leq k$.  We
can't have $s' \in w \sh w \sh w $ because the first $w$ (the one ending at
the first 1) is too short.  If $v$ begins in the first block of 0's,
then $y$ occurs in the middle block, so now pump down to obtain $s' = 0^i
1 0^j 1 0^n 1$, where $i \leq n$ and $j < n$.  Again, we can't have $s'
\in w \sh w \sh w$, because the third $w$ (the one ending at the third 1) must
contain all of the $0$'s immediately preceding the final 1, and hence is
too long.
\end{proof}

Clearly $\ss(\lbrace 0,1 \rbrace^*)$ is in NP, since given a
word $w$ we can guess $x$ and check that $w \in x \sh x$.  
However, we do not know whether we can solve membership for
$\ss(\lbrace 0,1 \rbrace^*)$ in polynomial time.  This question 
is apparently originally due to Jeff Erickson \cite{Erickson:2010},
and we learned about it from Erik Demaine.

\begin{openproblem}
Is $\ss(\lbrace 0,1 \rbrace^*)$ in P?
\label{ope}
\end{openproblem}

We mention a few related problems.  
Mansfield \cite{Mansfield:1982} showed that, given words $w, x, y$,
one can decide in polynomial
time if $w \in x \sh y$.    Later, the same author 
\cite{Mansfield:1983} and, independently, Warmuth and Haussler
\cite{Warmuth&Haussler:1984} showed that, given words
$w, x_1, x_2, \ldots, x_n$, deciding if
$w \in x_1 \sh x_2 \sh \cdots \sh x_n$ is NP-complete.  However,
the decision problem implied by
Open Problem~\ref{ope} asks something different:  given
$w$, does there exist $x$ such that $w \in x \sh x$?

\begin{openproblem}
Determine a simple closed form for
$$ a_k(n) := \left| \ \bigcup_{x \in \lbrace 0, 1, \ldots, k-1 \rbrace^n} 
(x \sh x) \ \right|.$$
\end{openproblem}

The first few terms are given as follows:
\begin{figure}[H]
\begin{center}
\begin{tabular}{|c|c|c|c|c|c|c|c|c|c|c|}
\hline
$n$ & 0 & 1 & 2 & 3 & 4 & 5 & 6 & 7 & 8 & 9 \\
\hline
$a_2(n)$ & 1 & 2 & 6 & 22 & 82 & 320 & 1268 & 5102 & 20632 & 83972\\
\hline
$a_3(n)$ & 1 & 3 & 15 & 93 & 621 & 4425 & 32703 & 248901  & & \\
\hline
$a_4(n)$ & 1 & 4 & 28 & 244 & 2332 & 23848 & 254416 & & & \\
\hline
$a_5 (n)$ & 1 & 5 & 45 & 505 & 6265 & 83225 & & & &  \\
\hline
$a_6 (n)$ & 1 & 6 & 66 & 906 & 13806 & 225336 & & & & \\
\hline
\end{tabular}
\end{center}
\end{figure}

Clearly
$a_i (0) = 1$, $a_i(1) = i$, and $a_i (2) = 2i^2-i$.  Empirically we have
$a_i (3) = 5i^3-5i^2+i$, 
$a_i (4) = 14i^4 - 21i^3 + 5i^2+ 3i$, and
$a_i (5) = 42 i^5 - 84i^4 + 32 i^3 + 21i^2 - 10i$.   This suggests that
$a_i (n) = {{{2n} \choose n}\over {n+1}} i^n - {{2n-1} \choose {n+1}}i^{n-1} +
	O(i^{n-2})$, but we do not have a proof.

\section{Perfect self-shuffle}

We can consider the same question for perfect shuffle.  
We define 
$$\pss(L) = \bigcup_{x \in L} \lbrace x \shap x \rbrace.$$

\begin{theorem}
Both the class of regular languages and the class of context-free
languages are closed under $\pss$.
\end{theorem}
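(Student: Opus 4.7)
The plan is to observe that perfect self-shuffle is nothing more than a letter-doubling morphism, and then invoke the well-known closure of the regular and context-free classes under (arbitrary, not just nonerasing) morphisms.

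More precisely, if $x = a_1 a_2 \cdots a_n$, then by the definition of perfect shuffle, $x \shap x = a_1 a_1 a_2 a_2 \cdots a_n a_n$. So if we define the morphism $h : \Sigma^* \to \Sigma^*$ by $h(a) = aa$ for every $a \in \Sigma$, then $x \shap x = h(x)$ for every word $x$, and therefore
\[
\pss(L) \;=\; \bigcup_{x \in L} \{ h(x) \} \;=\; h(L).
\]

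Once this identity is in hand, both halves of the theorem follow immediately: the regular languages are closed under morphisms, and the context-free languages are closed under morphisms, so $h(L)$ is regular (respectively context-free) whenever $L$ is. There is really no obstacle; the only thing to check carefully is the indexing in the definition of $\shap$ to confirm that the two copies of $x$ interleave into the doubling pattern $a_1 a_1 a_2 a_2 \cdots a_n a_n$ rather than something more subtle. This stands in sharp contrast with the situation for $\ss$ treated in the previous section, since ordinary shuffle with oneself may interleave the two copies in many different ways, while perfect shuffle is forced into the single deterministic pattern that makes it a morphism.
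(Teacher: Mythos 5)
Your proof is correct and is essentially identical to the paper's: the paper also observes that $\pss(L) = h(L)$ for the letter-doubling morphism $h(a) = aa$ and invokes closure of both classes under morphisms. You simply spell out the verification of this identity in more detail.
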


\begin{proof}
Use the fact that $\pss(L) = h(L)$, where $h$ is the morphism mapping
$a \rightarrow aa$ for each letter $a$.
\end{proof}

\section{Self-shuffle with reverse}

We now characterize those words $y$ that can be written as a shuffle of
a word with its reverse; that is, as a member of the set $x \sh x^R$.

An {\it abelian square} is a word of the form $x x'$ where $x'$ is a
permutation of $x$.

\begin{theorem}
(a) If there exists $x$ such that $y \in x \sh x^R$, 
then $y$ is an abelian square.

(b) If $y$ is a binary abelian square, then there exists
$x$ such that $y \in x \sh x^R$.
\end{theorem}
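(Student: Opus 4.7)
For part (a), the plan is to track Parikh vectors through the shuffle. Writing $n = |x|$ so that $|y| = 2n$, I fix any interleaving of $x$ and $x^R$ producing $y$, and let $k$ be the number of symbols of $x$ drawn into the first $n$ positions of $y$; the remaining $n-k$ positions of the first half must be filled with the first $n-k$ symbols of $x^R$. Those first $n-k$ symbols of $x^R$ are exactly the last $n-k$ symbols of $x$ read in reverse, and reversal preserves the letter multiset. Hence the first half of $y$ has the same Parikh vector as $x_1\cdots x_k$ together with $x_{k+1}\cdots x_n$, which is the Parikh vector of all of $x$. The identical counting applied to the second half of $y$ yields the same Parikh vector, so the two halves agree and $y$ is an abelian square. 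This is purely bookkeeping and I foresee no obstacle.

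For part (b), the plan is to exhibit a concrete $x$ together with a concrete split of positions witnessing $y \in x \sh x^R$. Write $y = uv$ with $|u|=|v|=n$, and let $p$ and $q$ denote the number of $0$'s and $1$'s in $u$ (equal to those in $v$, by hypothesis). Take $x = 0^p 1^q$, so $x^R = 1^q 0^p$. Define the set $A$ to consist of the positions in $u$ carrying a $0$ together with the positions in $v$ carrying a $1$, and let $B$ be the complement. Then $|A| = |B| = n$; the subsequence of $y$ indexed by $A$ reads (in increasing order of positions) $p$ zeros followed by $q$ ones, namely $x$; and the subsequence indexed by $B$ reads $q$ ones followed by $p$ zeros, namely $x^R$. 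This interleaving exhibits $y$ as an element of $x \sh x^R$.

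Neither part presents a serious obstacle; both arguments are one-shot constructions. The one subtlety worth flagging is that the construction in (b) uses the binary alphabet essentially: for larger alphabets the analogous statement fails, since $abcabc$ is an abelian square but a short case analysis over the six candidate words of the correct Parikh vector shows it cannot be written as $x \sh x^R$ for any $x$. I would add a brief remark after the proof pointing this out, to explain why the binary hypothesis in (b) cannot be dropped.
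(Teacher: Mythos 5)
Your proof is correct and follows essentially the same route as the paper: part (a) is the same prefix/suffix counting argument showing each half of $y$ carries the full Parikh vector of $x$, and part (b) uses the identical witness $x = 0^p 1^q$ with the $0$'s of the first half drawn from $x$ and the $1$'s from $x^R$. Even your closing remark about $012012$ matches a remark the paper makes immediately after the proof.
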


We introduce the following notation:  if $w = a_1 a_2 \cdots a_n$, then
by $w[i..j]$ we mean the factor $a_i a_{i+1} \cdots a_j$.

\begin{proof}
(a)
If $y$ is the shuffle of $x$ with its reverse,
then the first half of $y$ must contain some prefix of $x$, say $x[1..k]$.
Then the second half of $y$ must contain the remaining suffix of $x$, say
$x[k+1..n]$.  Then the second half of $y$ must contain,
in the remaining positions, some prefix of $x$, reversed.  But by counting we
see that this prefix must be $x[1..k]$.
So the first half of $y$ must contain the remaining symbols of $x$, reversed.
This shows that the
first half of $y$ is just $x[1..k]$ shuffled with $x[k+1..n]^R$, and
the second half of $y$ is just $x[k+1..n]$ shuffled with $x[1..k]^R$.

So the second half of $y$ is a permutation of the first half of $y$.

(b) 
It remains to see that every binary abelian square can be obtained in this way.

To see this, note that if $x$ contains $j$ 0's and $n-j$ 1's, then we can get
$y$ by shuffling $0^j 1^{n-j}$ with its reverse.  We get the $0$'s in $x$ by
choosing them from $0^j 1^{n-j}$, and we get the $1$'s in $x$ by choosing
them from $(0^j 1^{n-j})^R$.
\end{proof}

\begin{remark}
The word $012012$ is an example of a ternary abelian square that cannot
be written as an element of $w \sh w^R$ for any word $w$.
\end{remark}

\begin{remark}
The preceding proof gives another proof of the
classic identity
$${{2n} \choose n} = {n \choose 0}^2 + \cdots +  {n \choose n}^2 .$$
To see this, we use the following bijections:  the binary
words of length $2n$ having exactly $n$ $0$'s (and hence $n$ $1$'s) are in
one-one correspondence with the abelian squares of length $2n$, as follows:
take such a word and complement the last $n$ bits.   Thus there are
${2n} \choose n$ binary abelian squares of length $2n$.

On the other hand,
there are ${n \choose i}^2$
words that are abelian squares and have a first and last half,
each with $i$ 0's.
Summing this from $i = 0$ to $n$ gives the result.
\end{remark}

\begin{corollary}
The language
$$ \ssr(\lbrace 0,1 \rbrace^*) = \bigcup_{x \in \lbrace 0,1 \rbrace^*} x \sh x^R $$
is not a CFL, but is in P.
\end{corollary}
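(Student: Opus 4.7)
The plan is to combine the preceding theorem---which identifies $\ssr(\{0,1\}^*)$ with the set of binary abelian squares---with two separate arguments. The polynomial-time algorithm is immediate: given $w$, verify that $|w|=2n$ is even, then count the 0's in $w[1..n]$ and in $w[n+1..2n]$, accepting iff the two counts agree. This runs in linear time.

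For the non-CFL claim, I would apply the pumping lemma directly to $L = \ssr(\{0,1\}^*)$ using the candidate string $s = 0^p 1^p 0^p 1^p$, where $p$ is the pumping constant. Any decomposition $s = uvxyz$ with $|vxy| \leq p$ forces $vxy$ into at most two consecutive blocks of $s$. I would then split into cases according to which blocks contain $v$ and $y$, and whether these factors are monochromatic or straddle a 0/1 boundary; in each case, I exhibit an index $i \in \{0,2\}$ for which the pumped word $u v^i x y^i z$ has unequal numbers of 0's in its two halves.

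Concretely, when $vxy$ lies inside the first 0-block or the last 1-block, pumping up by $i=2$ adds unmatched symbols on one side of the midpoint. When $vxy$ lies inside one of the two interior blocks, pumping up actually preserves membership in $L$ (the new midpoint shifts to absorb the inserted symbols), so one must pump down by $i=0$; this shifts the midpoint into a region where the deleted symbols unbalance the two halves. When $v$ or $y$ straddles a 0/1 boundary, $v^2$ (respectively $y^2$) introduces extra block transitions, and a direct count of 0's in each half of the pumped word still yields the required imbalance.

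The main obstacle is the interior-block case, since the naive "always pump up" strategy fails for the interior 1-block and interior 0-block: extending those blocks simply produces a longer abelian square. Recognizing that one must pump those particular blocks \emph{down}, and then verifying exactly where the new midpoint lands in every sub-case, is the most delicate bookkeeping in the argument.
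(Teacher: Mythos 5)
Your membership-in-P argument is exactly the paper's: both reduce to checking that the input has even length and that the two halves contain the same number of $0$'s, via the characterization of $\ssr(\{0,1\}^*)$ as the binary abelian squares. For the non-CFL part you take a genuinely different route: the paper first intersects with the regular set $0^+1^+0^+1^+$, uses the characterization theorem to write the intersection explicitly as $\lbrace 0^m1^n0^{m+2k}1^n\rbrace \cup \lbrace 0^m1^{n+2k}0^m1^n\rbrace$, and only then pumps $0^n1^n0^n1^n$; you pump the full abelian-square language directly. Your route can be made to work, but one of your case claims is false as stated. You assert that when $v$ or $y$ straddles a $0/1$ boundary, pumping \emph{up} always produces an imbalance. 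It does not. Take $p=10$ and $s=0^{10}1^{10}0^{10}1^{10}$ with $u=0^{10}1^{8}$, $v=1^{2}0$, $x=\epsilon$, $y=0$, $z=0^{8}1^{10}$ (so $|vxy|=4$ and $v$ straddles the middle boundary). Then
$$uv^2xy^2z \;=\; 0^{10}\,1^{10}\,0\,1^{2}\,0^{11}\,1^{10},$$
a word of length $44$ whose first half $0^{10}1^{10}01$ and second half $10^{11}1^{10}$ each contain exactly $11$ zeros --- so the pumped-up word is still an abelian square. The extra block transitions do not, by themselves, destroy membership, because an abelian square has no block structure to violate.

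The gap is repairable: in this configuration $uv^0xy^0z=0^{10}1^{8}0^{8}1^{10}$ has $10$ zeros in its first half and $8$ in its second, so $i=0$ succeeds; and one can check (by tracking how many $0$'s and how many $1$'s the deletion removes from which pair of adjacent blocks, and where the new midpoint lands) that for every legal decomposition at least one of $i=0$ or $i=2$ escapes $L$, with odd $|vy|$ handled by parity of the length. But this is precisely the bookkeeping the paper's intersection trick is designed to avoid: inside $0^+1^+0^+1^+$, any pumping that introduces extra block transitions immediately leaves the regular set, and the remaining monochromatic cases reduce to checking a couple of linear equations against the explicit form of the intersected language. So either adopt the paper's intersection first, or correct the straddling case to pump down and carry out the midpoint computation in full.
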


\begin{proof}
From above, intersecting $ \ssr(\lbrace 0,1 \rbrace^*)$ 
with $0^+ 1^+ 0^+ 1^+$
gives 
$$\lbrace 0^m 1^n 0^{m+2k} 1^n \ : \ m, n \geq 1 \text{ and } k \geq 0 \rbrace
\ \cup \ \lbrace 0^m 1^{n+2k} 0^m 1^n \ : \ m,n \geq 1 \text{ and }  k \geq 0
\rbrace.$$
Now the pumping lemma applied to $z = 0^n 1^n 0^n 1^n$ shows this is not
a CFL.

Since we can easily test if a string is an abelian square by counting
the number of $0$'s in the first half, and comparing it to the number of
$0$'s in the second half, it follows that
$\ssr(\lbrace 0,1 \rbrace^*)$ is in P.
\end{proof}

As before, we can define 
$$ b_k(n) := \left| \ \bigcup_{x \in \lbrace 0, 1, \ldots, k-1 \rbrace^n}
(x \sh x^R) \ \right|.$$
For $k = 2$, our results above explain $b_k (n)$, but
we do not know a closed form for larger $k$.

The first few terms are given as follows:
\begin{figure}[H]
\begin{center}
\begin{tabular}{|c|c|c|c|c|c|c|c|c|c|c|}
\hline
$n$ & 0 & 1 & 2 & 3 & 4 & 5 & 6 & 7 & 8 & 9 \\
\hline
$b_2(n)$ & 1 & 2 & 6 & 20 & 70 & 252 & 924 & 3432 & 12870 & 48620\\
\hline
$b_3(n)$ & 1 & 3 & 15 & 87 & 549 & 3657 & 25317 & 180459 & & \\
\hline
$b_4(n)$ & 1 & 4 & 28  & 232 & 2116 & 20560 & 208912 & & & \\
\hline
$b_5 (n)$ & 1 & 5 & 45 & 485 & 5785 & 73785 & & & &  \\
\hline
$b_6 (n)$ & 1 & 6 & 66 & 876 & 12906 & 203676 & & & & \\
\hline
\end{tabular}
\end{center}
\end{figure}

Clearly
$b_i (0) = 1$, $b_i (1) = i$, and $b_i(2) = 2i^2 -i $.
Empirically, we have
$b_i (3) = 5i^3-6i^2+2i$, $b_i (4) = 14i^4 - 27i^3 + 17i^2 - 3i$, and
$b_i (5) = 42i^5 -110 i^4 + 94 i^3 -17 i^2 - 8i$.  This suggests that
$b_i (n) = {{{2n}\choose n}\over {n+1}} i^n - \left( {{2n-1} \choose n-1} -2^{n-1}
\right ) i^{n-1} + O(i^{n-2})$, but we do not have a proof.

\section{Perfect self-shuffle with reverse}

We now consider the operation $w \rightarrow w \shap w^R$ applied to
languages.  Recall that $\pssr(L) = \bigcup_{x \in L} \lbrace
x \shap x^R \rbrace$.

\begin{theorem}
If $L$ is regular then $\pssr(L)$ is not necessarily regular.
\end{theorem}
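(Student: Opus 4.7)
The plan is to exhibit a regular witness $L$ for which $\pssr(L)$ fails to be regular, using the standard technique of intersecting with a regular language and applying the pumping lemma. I would take the simplest possible candidate, $L = \{0,1\}^*$, and try to show that $\pssr(L)$ is not regular.

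The first step is a direct calculation: if $x = 0^n 1^n$, then $x^R = 1^n 0^n$, and the perfect shuffle interleaves them symbol by symbol to yield $x \shap x^R = (01)^n (10)^n$. So the candidate non-regular slice is $(01)^n(10)^n$, and the natural ``filter'' to intersect with is $R = (01)^*(10)^*$ (or a $+$ variant). I would then prove that
$$ \pssr(\{0,1\}^*) \cap R = \{(01)^n (10)^n \ : \ n \geq 0 \}. $$
One inclusion is the computation above. For the other, I would argue as follows: suppose $y \in (01)^a (10)^b$ and $y = x \shap x^R$ for some $x$ of length $n = a+b$. Reading the odd-indexed positions of $y$ recovers $x$, and reading the even-indexed positions recovers $x^R$. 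The odd positions spell out $0^a 1^b$, forcing $x = 0^a 1^b$; the even positions spell out $1^a 0^b$, forcing $x^R = 1^a 0^b$, i.e.\ $x = 0^b 1^a$. Consistency requires $a=b$.

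Once this equality is established, the pumping lemma applied to $(01)^n(10)^n$ finishes the job in the usual way, since any pumped factor sits inside one of the two blocks and destroys the equality of the two exponents. Because the regular languages are closed under intersection with regular languages, the non-regularity of $\pssr(\{0,1\}^*) \cap R$ implies the non-regularity of $\pssr(\{0,1\}^*)$, which completes the proof.

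I do not expect a serious obstacle here; the only subtle step is the case analysis showing that $\pssr(\{0,1\}^*) \cap R$ contains \emph{only} the diagonal words $(01)^n(10)^n$, which hinges on the rigidity of the perfect shuffle (odd positions give $x$, even positions give $x^R$). The degenerate cases $a=0$ or $b=0$ also collapse to $a=b=0$ by the same argument, so no separate treatment is needed.
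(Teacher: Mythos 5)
Your proof is correct, and it follows the same overall template as the paper's: choose a regular witness $L$, intersect $\pssr(L)$ with a regular filter, and show the result is non-regular by pumping. The difference is in the witness. The paper takes $L = 0^+ 1 0^+$ and intersects with $0^+ 1 1 0^+$, obtaining $\lbrace 0^n 1 1 0^n : n \geq 2\rbrace$; because each word of $L$ contains a single marker symbol $1$, the location of the two $1$'s in $x \shap x^R$ immediately pins down the structure, and almost no case analysis is needed. You instead take the full language $\lbrace 0,1\rbrace^*$ and filter with $(01)^*(10)^*$, which requires the extra ``rigidity'' step (odd positions of $x \shap x^R$ recover $x$, even positions recover $x^R$) to show the intersection is exactly $\lbrace (01)^n(10)^n : n \geq 0\rbrace$. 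That step is correct as you state it, and it buys you a slightly stronger conclusion: even $\pssr(\Sigma^*)$ is non-regular, so non-regularity is not an artifact of a contrived witness. One small imprecision: when you pump $(01)^n(10)^n$, the pumped factor need not ``destroy the equality of the exponents'' --- if $v$ is not a multiple of the period $01$, pumping instead destroys membership in $(01)^*(10)^*$ altogether --- but either way the pumped word leaves the language, so the argument stands.
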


\begin{proof}
Let $L = 0^+ 1 0^+$.  Then $\pssr(L) \ \cap \ 0^+ 1 1 0^+ =
\lbrace 0^n 1 1 0^n \ : \ n \geq 2 \rbrace$, which is clearly
not regular.
\end{proof}

\begin{theorem}
If $L$ is context-free then $\pssr(L)$ is not necessarily context-free.
\end{theorem}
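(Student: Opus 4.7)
The plan is to mimic the regular case handled just above---where $L = 0^+ 1 0^+$ and intersecting $\pssr(L)$ with a regular language yielded the non-regular $\{0^n 11 0^n\}$---but now pick a context-free $L$ whose $\pssr$ directly contains a pattern with three matched counts, the classical obstruction to context-freeness. The candidate I would try is
\[
 L = \{0^a 1 0^{2a} : a \geq 1\},
\]
which is context-free (for example, $S \to 0T00$, $T \to 0T00 \mid 1$ generates it).

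First I would compute $\pssr(L)$ explicitly. For $x = 0^a 1 0^{2a}$ of length $3a+1$ we have $x^R = 0^{2a} 1 0^a$. In $x \shap x^R$, position $2i-1$ holds $x_i$ and position $2i$ holds $x^R_i$. The single $1$ in $x$ is at position $a+1$, so it lands at position $2a+1$ in the shuffle; the single $1$ in $x^R$ is at position $2a+1$, so it lands at position $4a+2$. All other positions are $0$ and the total length is $6a+2$, so
\[
 x \shap x^R = 0^{2a}\, 1\, 0^{2a}\, 1\, 0^{2a},
\]
and hence $\pssr(L) = \{0^{2a} 1 0^{2a} 1 0^{2a} : a \geq 1\}$.

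It then remains to show this language is not context-free, which I would do by the standard pumping lemma. Given the pumping constant $N$, take $s = 0^{2N} 1 0^{2N} 1 0^{2N}$ and consider any decomposition $s = uvxyz$ with $|vxy| \leq N$ and $|vy| \geq 1$. Since each of the three $0$-blocks has length $2N$ and they are separated by $1$'s, $vxy$ can meet at most two of the three blocks. If $vy$ contains one of the two $1$'s, pumping up produces three $1$'s, violating the form; otherwise, pumping up lengthens at most two of the three $0$-blocks and leaves the third fixed at $2N$, breaking the required equality of block lengths. In every case $uv^2 x y^2 z \notin \pssr(L)$, a contradiction. The main obstacle is really just choosing $L$: the factor $2$ in ``$0^{2a}$'' is exactly what forces the two $1$'s of $x \shap x^R$ to land symmetrically so that all three $0$-blocks share the common length $2a$; a weaker constraint like $\{0^a 1 0^b : 1 \leq a < b\}$ would instead yield the still-context-free language $\{0^{2a} 1 0^{2(b-a)} 1 0^{2a} : 1 \leq a < b\}$.
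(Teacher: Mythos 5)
Your proof is correct, and it takes a genuinely different (and arguably cleaner) route than the paper. The paper uses the four-letter language $L = \lbrace 0^m 1^m 2^n 3^n : m,n \geq 1\rbrace$ and must then intersect $\pssr(L)$ with the regular language $(03)^+(12)^+(21)^+(30)^+$ to carve out the non-context-free witness $\lbrace (03)^n(12)^n(21)^n(30)^n : n \geq 1\rbrace$; the intersection is needed because $\pssr(L)$ itself is a messier union over the cases $m<n$, $m=n$, $m>n$. Your choice $L = \lbrace 0^a 1 0^{2a} : a \geq 1\rbrace$ is a linear CFL over a binary alphabet whose image is \emph{exactly} $\lbrace 0^{2a} 1 0^{2a} 1 0^{2a} : a \geq 1\rbrace$, so no intersection step (and hence no appeal to closure under intersection with regular sets) is required --- only the pumping lemma, which you apply correctly: the window $vxy$ of length at most $N$ cannot meet all three $0$-blocks, so pumping either duplicates a $1$ or unbalances the blocks against the untouched one of length $2N$. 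Your position computation ($x_i$ at position $2i-1$, $x^R_i$ at position $2i$, the two $1$'s landing at $2a+1$ and $4a+2$) matches the paper's definition of $\shap$, and your closing remark about why the factor $2$ is essential is a nice sanity check. What the paper's choice buys in exchange for the extra intersection step is a certain thematic uniformity: the same language template and the same style of regular-intersection argument recur throughout that section (e.g., in the $\bdr$ and $\bdir$ non-closure proofs), whereas your example is self-contained and more elementary.
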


\begin{proof}
Let $L = \lbrace 0^m 1^m 2^n 3^n \ :  \ m, n\geq 1 \rbrace$.  
Then $\pssr(L) \ \cap \ (03)^+ (12)^+ (21)^+ (30)^+ =
\lbrace (03)^n (12)^n (21)^n (30)^n \ : \ n \geq 1 \rbrace$,
and this language is easily seen to be non-context-free.
\end{proof}

\begin{theorem}
If $L$ is regular then $\pssr(L)$ is necessarily context-free.
\label{pssr}
\end{theorem}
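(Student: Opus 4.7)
The plan is to characterize $\pssr(\Sigma^*)$ structurally and then write $\pssr(L)$ as the intersection of a context-free language with a regular language.

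First, I would show that the map $x \mapsto x \shap x^R$ is a bijection between $\Sigma^n$ and the set of palindromes of length $2n$. If $x = a_1 \cdots a_n$, then directly from the definition
$$x \shap x^R = a_1 a_n a_2 a_{n-1} \cdots a_{n-1} a_2 a_n a_1,$$
which is visibly a palindrome of length $2n$. Conversely, given any even-length palindrome $w$ of length $2n$, set $x[i] := w[2i-1]$; the palindrome condition $w[2i] = w[2n+1-2i]$ gives $w[2i] = x[n+1-i]$, so $w = x \shap x^R$, and $x$ is uniquely determined by $w$.

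Consequently,
$$\pssr(L) = \{ w \ : \ w \text{ is an even-length palindrome and } w[1] w[3] \cdots w[|w|-1] \in L \} = P \cap R,$$
where $P$ is the (context-free) language of all even-length palindromes over $\Sigma$, and $R$ is the language of words $w$ of even length whose odd-indexed subsequence belongs to $L$.

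The only remaining step is to verify that $R$ is regular. Starting from a DFA $(Q, \Sigma, \delta, q_0, F)$ for $L$, I would build a DFA with state set $Q \times \{0,1\}$: the second coordinate tracks the parity of the number of symbols read so far, and the first coordinate simulates the DFA for $L$ only on the odd-indexed symbols. Explicitly, on reading $a$ from $(q,0)$ move to $(\delta(q,a), 1)$, while on reading $a$ from $(q,1)$ move to $(q,0)$, with start state $(q_0,0)$ and accepting states $F \times \{0\}$. Since CFLs are closed under intersection with regular languages, $\pssr(L) = P \cap R$ is context-free. There is no essential obstacle here: once the palindrome characterization is in hand, the product construction for $R$ is routine.
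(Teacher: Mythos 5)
Your argument is correct: $x \shap x^R$ is indeed an even-length palindrome (for $w = x \shap x^R$ one checks $w[2i-1] = x[i] = w[2n+2-2i]$ and $w[2i] = x[n+1-i] = w[2n+1-2i]$), every even-length palindrome arises uniquely this way from its odd-indexed subsequence, and your two-track DFA for $R$ is the standard product construction. The paper reaches the same conclusion by a different decomposition: it first proves the identity $w \shap w^R = \bdir(w)\,\bdir(w)^R$, where $\bdir$ interleaves the first half of $w$ with the reversed second half, so that $\pssr(L) = \bigcup_{x \in \bdir(L)} x x^R$; it then invokes closure of regular languages under $\bdir$ (Theorem~\ref{bdir}) and a PDA construction showing $\{ww^R : w \in L'\}$ is context-free for regular $L'$ (Lemma~\ref{ww}). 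Both arguments ultimately rest on the same structural fact---that $x \shap x^R$ is a palindrome determined by $x$---but they package it differently: the paper presents the palindrome as an explicit $uu^R$ with $u$ ranging over a regular set, while you characterize the image set intrinsically as $P \cap R$ with $P$ the even-length palindromes and $R$ regular, and then appeal only to closure of CFLs under intersection with regular languages. Your version is arguably more economical (it needs neither Theorem~\ref{bdir} nor Lemma~\ref{ww}) and slightly more informative, since it exhibits $\pssr(L)$ exactly as the set of even-length palindromes whose odd decimation lies in $L$; the paper's version has the advantage of reusing machinery ($\bdir$ and the $ww^R$ lemma) already developed for other results in the same section.
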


We defer the proof of Theorem~\ref{pssr} until Section~\ref{idwr}
below.

\section{Unshuffling}

Given a finite word $w = a_1 a_2 \cdots a_n$ we can decimate it into its
odd- and even-indexed parts, as follows:
\begin{eqnarray*}
\odd(w) &=& a_1 a_3 \cdots a_{n-((n+1) \bmod 2)} \\
\even(w) &=& a_2 a_4 \cdots a_{n-(n \bmod 2)}
\end{eqnarray*}
Similarly, given $w = a_1 a_2 \cdots a_n$ we can extract its first and
last halves, as follows:
\begin{eqnarray*}
\fh(w) &=& a_1 a_2 \cdots a_{\lfloor n/2 \rfloor} \\
\lh(w) &=& a_{\lfloor n/2 \rfloor + 1} \cdots a_n 
\end{eqnarray*}

We now turn our attention to four ``unshuffling'' operations:
\begin{eqnarray*}
\bd(w) &=& \odd(w) \even(w) \\
\bdr(w) &=& \odd(w) \even(w)^R \\
\bdi(w) &=& \fh(w) \shap \lh(w) \\
\bdir(w) &=& \fh(w) \shap \lh(w)^R
\end{eqnarray*}

\subsection{Binary decimation}

We first consider a kind of binary decimation, which forms a sort of inverse
to perfect shuffle.

Given a word $w = a_1 a_2 \cdots a_{2n}$ of even length, note that
$$\bd (w) =  a_1 a_3 \cdots a_{2n-1} a_2 a_4 \cdots a_{2n}$$
is formed
by ``unshuffling'' the word into its odd- and even-indexed letters.
For example, the French word {\tt maigre} becomes the word {\tt mirage}
under this operation.  

\begin{theorem}
Neither the class of regular languages nor the class of context-free
languages is closed under 
$\bd$.
\end{theorem}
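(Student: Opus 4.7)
The plan is to exhibit a single regular language $L$ over $\{0,1\}$ such that $\bd(L)$ fails even to be context-free; since every regular language is also context-free, this settles both halves of the theorem at once. I would take $L = (00+11)^*$, which is visibly regular.

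The first step is to compute $\bd(L)$ explicitly. Every $w \in L$ has the form $a_1 a_1 a_2 a_2 \cdots a_m a_m$ for some $m \geq 0$ and $a_i \in \{0,1\}$. A routine inspection of the definitions of $\odd$ and $\even$ on an even-length word shows that $\odd(w) = \even(w) = a_1 a_2 \cdots a_m$, so $\bd(w) = uu$ with $u = a_1 \cdots a_m$. Since every binary string $u$ arises this way from some $w \in L$, I obtain $\bd(L) = \{uu : u \in \{0,1\}^*\}$.

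The second step is to invoke the classical fact that the copy language $\{uu : u \in \{0,1\}^*\}$ is not context-free: I would apply the context-free pumping lemma to $z = 0^n 1^n 0^n 1^n$, observing that any decomposition $z = rstuv$ with $|stu| \leq n$ can touch at most two of the four blocks, so pumping $s$ and $u$ necessarily destroys the pattern $xx$. This finishes the argument, because $L$ is regular (hence context-free) while $\bd(L)$ is not context-free (hence not regular). There is no real obstacle here: the only nontrivial content is the non-context-freeness of the copy language, which is textbook material, and the small index calculation showing that $\bd$ collapses $(00+11)^*$ onto the copy language is immediate from the definitions.
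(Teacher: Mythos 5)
Your proposal is correct and follows essentially the same route as the paper, which uses $L = (00+11)^+$ and observes that $\bd(L)$ is the copy language $\{ww : w \in \{0,1\}^+\}$, citing its non-context-freeness as well known. Your use of the Kleene star instead of the plus and your spelled-out pumping argument are immaterial differences.
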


\begin{proof}
Consider the regular (and context-free) language
$L = (00+11)^+$.  Then $\bd(L) = \lbrace w w \ : \ w \in \lbrace
0, 1 \rbrace^+ \rbrace$, which is well-known to be non-context-free.
\end{proof}

\subsection{Binary decimation with reverse}

We now consider the operation $\bdr$, which is a kind of binary
decimation with reverse.
Note that
$$\bdr(a_1 a_2 \cdots a_{2n}) =
a_1 a_3 \cdots a_{2n-1} a_{2n} \cdots a_4 a_2.$$
For example, $\bdr({\tt friend}) = {\tt finder}$
and $\bdr({\tt perverse}) = {\tt preserve}$.

\begin{theorem}
The class of regular languages is not closed under $\bdr$.
\end{theorem}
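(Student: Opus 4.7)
The plan is to exhibit a concrete regular language $L$ whose image under $\bdr$ is a classical non-regular language, following the template of the preceding theorem for $\pssr$. The natural choice is $L = (01)^*$: every word in $L$ has even length $2n$ with a $0$ in each odd position and a $1$ in each even position, so $\odd((01)^n) = 0^n$ and $\even((01)^n)^R = (1^n)^R = 1^n$, whence $\bdr((01)^n) = 0^n 1^n$. I would then conclude $\bdr(L) = \{0^n 1^n : n \geq 0\}$, which is non-regular by the pumping lemma, so that regularity is not preserved by $\bdr$.

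There is essentially no obstacle. The only point requiring a sliver of care is to keep the index conventions in the definition of $\bdr$ straight; the example $(01)^*$ was chosen precisely because the even-indexed block consists of identical symbols, making the reversal step act trivially and rendering the calculation transparent. If one preferred to be robust against any off-by-one concern, an intersection step of the form $\bdr(L)\cap 0^*1^* = \{0^n 1^n : n \geq 0\}$ would serve equally well, but in this instance such a step is not even necessary since the equality holds on the nose.
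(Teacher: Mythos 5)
Your proof is correct and follows essentially the same strategy as the paper: exhibit a concrete regular witness and observe that its image under $\bdr$ is a standard non-regular language. The paper uses $L = (00)^+ 11$ with $\bdr(L) = \lbrace 0^n 1 1 0^n : n \geq 1 \rbrace$, while your choice $(01)^*$ giving $\lbrace 0^n 1^n : n \geq 0\rbrace$ is equally valid and arguably even cleaner.
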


\begin{proof}
Let $L = (00)^+ 11$.  Then $\bdr(L) = \lbrace 0^n 11 0^n \ : \ n
\geq 1 \rbrace$, which is not regular. 
\end{proof} 

\begin{theorem}
The class of context-free languages is not closed under $\bdr$.
\end{theorem}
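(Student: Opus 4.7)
The plan is to find a context-free language $L$ whose image $\bdr(L)$, when intersected with a well-chosen regular language, is a standard non-context-free example. A convenient candidate is
\[
L = \{(ac)^n (bd)^n : n \geq 1\},
\]
which is context-free: writing $u = ac$ and $v = bd$ exhibits it as a relabelling of the canonical language $\{u^n v^n : n \geq 1\}$, recognized by an obvious pushdown automaton. Every word in $L$ has even length $4n$, so $\bdr$ is defined on $L$.

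Next I would compute $\bdr(w)$ for a generic $w = (ac)^n (bd)^n$. Reading off the odd-indexed positions gives $a^n b^n$, and reading off the even-indexed positions gives $c^n d^n$; reversing the latter and concatenating yields
\[
\bdr\bigl((ac)^n (bd)^n\bigr) = a^n b^n d^n c^n.
\]
Consequently $\bdr(L) = \{a^n b^n d^n c^n : n \geq 1\}$, and since this already sits inside $a^+ b^+ d^+ c^+$, we have
\[
\bdr(L) \cap a^+ b^+ d^+ c^+ = \{a^n b^n d^n c^n : n \geq 1\}.
\]

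If $\bdr(L)$ were context-free, then so would this intersection be, as CFLs are closed under intersection with regular languages. But the pumping lemma rules this out: taking $z = a^N b^N d^N c^N$ for the pumping constant $N$, any decomposition $z = \alpha \beta \gamma \delta \varepsilon$ with $|\beta \gamma \delta| \le N$ can meet at most two of the four maximal monochromatic blocks, so pumping alters the lengths of at most two blocks and breaks the equal-count condition. This contradiction shows $\bdr(L)$ is not context-free, completing the proof. The only creative step is the initial choice of $L$; everything after that is routine, so I do not anticipate a serious obstacle.
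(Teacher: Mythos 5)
Your proof is correct and takes essentially the same approach as the paper, which uses $L = \{(03)^n(12)^n : n \geq 1\}$ with $\bdr(L) = \{0^n1^n2^n3^n : n \geq 1\}$; your choice $L=\{(ac)^n(bd)^n\}$ is the same construction up to relabelling of letters. (The intersection with $a^+b^+d^+c^+$ is harmless but unnecessary, since $\bdr(L)$ is already exactly the non-context-free language.)
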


\begin{proof}
Consider $L = \lbrace (03)^n (12)^n \ : \ n \geq 1 \rbrace$.
Then $\bdr(L) = \lbrace 0^n 1^n 2^n 3^n  \ : \ n \geq 1 \rbrace$, 
which is not context-free.
\end{proof}

\begin{theorem}
If $L$ is regular, then $\bdr(L)$ is context-free.
\end{theorem}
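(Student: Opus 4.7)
The plan is to construct, from a DFA for $L$, a (linear) context-free grammar that generates $\bdr(L)$ directly. Fix a DFA $M = (Q, \Sigma, \delta, q_0, F)$ recognizing $L$. The starting observation is that every $z \in \bdr(L)$ can be written either as $z = a_1 a_2 \cdots a_n b_n b_{n-1} \cdots b_1$ (even-length case) or as $z = a_1 a_2 \cdots a_n a_{n+1} b_n b_{n-1} \cdots b_1$ (odd-length case), where the preimage $w = a_1 b_1 a_2 b_2 \cdots$ lies in $L$. Thus $z \in \bdr(L)$ iff $M$ accepts the interleaving obtained by pairing the $i$-th letter of the first portion of $z$ with the $i$-th-from-last letter of the second portion.

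I would then take nonterminals $\{ A_q : q \in Q \}$, start symbol $A_{q_0}$, and productions of three kinds: first, $A_q \to a \, A_{\delta(\delta(q,a),b)} \, b$ for all $q \in Q$ and $a,b \in \Sigma$, which simulates two consecutive transitions of $M$ while wrapping the next outer letters of $z$ around the remaining nonterminal; second, $A_q \to \epsilon$ for each $q \in F$, to terminate in the even-length case; and third, $A_q \to a$ whenever $\delta(q,a) \in F$, to terminate in the odd-length case with an unmatched middle letter. A routine induction on derivation length should give $A_{q_0} \Rightarrow^* z$ iff $z \in \bdr(L)$, hence $L(G) = \bdr(L)$.

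The step requiring the most care is verifying the bijection between derivations and accepting computations of $M$ on the reconstructed interleaving --- in particular, checking that the outer-in derivation of the linear grammar produces exactly the reversal of the even-indexed portion of $w$ without off-by-one errors, and that the odd-length case is handled correctly by the $A_q \to a$ productions. Since the grammar produced is linear, one in fact obtains the stronger conclusion that $\bdr(L)$ is a linear context-free language. An equivalent perspective via a PDA: while reading the first portion of $z$, nondeterministically guess each $b_i$, simulate $M$ on $a_i b_i$, and push $b_i$ onto the stack; while reading the second portion, pop and verify. The stack's LIFO discipline performs precisely the reversal built into $\bdr$, which is why $\bdr(L)$ remains context-free even though $\bd(L)$ need not be.
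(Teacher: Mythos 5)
Your proof is correct and rests on exactly the same idea as the paper's: the paper builds the nondeterministic PDA you sketch in your final paragraph (guess the preimage, match its odd-indexed letters against the input, push the even-indexed letters, then pop and compare), and your linear grammar with productions $A_q \to a\, A_{\delta(\delta(q,a),b)}\, b$ is that same automaton written grammatically, with the outside-in derivation playing the role of the stack. The grammar formulation is somewhat more explicit than the paper's informal PDA description and yields the slightly stronger conclusion that $\bdr(L)$ is in fact a \emph{linear} context-free language, but the route is the same.
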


\begin{proof}
We show how to accept words of $\bdr(L)$ of even length; words of odd length
can be treated similarly. 

On input $w = b_1 b_2 \cdots b_{2n}$, a PDA can guess
$x = a_1 a_2 \cdots a_{2n}$
in parallel with the elements of the input.  
At each stage the PDA compares $a_i$ to $b_{(i+1)/2}$ if $i$ is odd;
and otherwise it pushes $a_i$ onto the stack (if $i$ is even).  At some
point the PDA nondeterministically guesses that it has seen $a_{2n}$ and
pushed it on the stack; it now pops the stack (which is holding $a_{2n}
\cdots a_4 a_2$) and compares the stack contents to the rest of the
input $w$.

The PDA accepts if $x \in L$ and the symbols matched as described.
\end{proof}

\subsection{Inverse decimation}

We now consider a kind of inverse decimation, which shuffles the
first and last halves of a word.  

Note that if $w = a_1 \cdots a_{2n}$
is of even length, then 
$$\bdi( w) =
a_1 a_{n+1} a_2 a_{n+2} \cdots a_n a_{2n}.$$
Further, $\bdi(\bd(w)) = \bd(\bdi(w))$ for $w$ of even length.

\begin{theorem}
If $L$ is regular then so is $\bdi (L)$.
\end{theorem}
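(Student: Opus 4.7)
The plan is to construct an NFA for $\bdi(L)$ by running two copies of a DFA for $L$ in parallel, with a nondeterministic guess of the ``midpoint'' state. Let $M = (Q, \Sigma, \delta, q_0, F)$ be a DFA with $L(M) = L$. The key observation is that for $u = u_1 u_2 \cdots u_n$, we have $u \in \bdi(L)$ iff there exists $w = w_1 w_2 \cdots w_n \in L$ such that $u_{2i-1} = w_i$ and $u_{2i} = w_{\lfloor n/2 \rfloor + i}$ for each $i$ in the appropriate range. In other words, the odd-indexed letters of $u$ spell out $\fh(w)$ in order while the even-indexed letters of $u$ spell out $\lh(w)$ in order.

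The NFA $M'$ begins by nondeterministically guessing a state $q_m \in Q$ that it believes to be the state of $M$ after reading $\fh(w)$. It then simulates two copies of $M$: copy $A$ starts at $q_0$ and is responsible for the symbols of $u$ that correspond to $\fh(w)$, while copy $B$ starts at $q_m$ and handles the symbols that correspond to $\lh(w)$. Since $u$ is the perfect shuffle of $\fh(w)$ with $\lh(w)$, these symbols alternate. The state of $M'$ is a tuple $(q_A, q_B, q_m, p) \in Q \times Q \times Q \times \{0,1\}$, where $p$ is a parity bit indicating whether the next input symbol is fed to copy $A$ or to copy $B$; the guess $q_m$ is fixed for the entire run. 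On acceptance, $M'$ checks that $q_A = q_m$ (so the guess is consistent with the first half of $w$) and $q_B \in F$ (so the resulting word $w$ is accepted by $M$). The state space has size $2|Q|^3$, and so $\bdi(L)$ is regular.

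For odd length $n = 2m+1$, the extended perfect-shuffle definition puts the final symbol of $u$ into $\lh(w)$ rather than into $\fh(w)$; the alternation pattern of who receives the next symbol is therefore $A, B, A, B, \ldots, A, B, B$. We accommodate this by adding an acceptance rule that allows the last input symbol to be routed to copy $B$ regardless of parity, or equivalently by taking the union of an ``even-length'' NFA and an ``odd-length'' NFA built by essentially the same recipe.

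I expect no real obstacle: this is a standard parallel-simulation construction with a nondeterministically guessed pivot state, very much in the spirit of the inverse-morphism arguments used earlier in the paper for ordinary shuffle. The only genuine care required is the bookkeeping for the odd-length case, which amounts to a single extra transition at the end of the input.
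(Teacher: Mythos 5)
Your proposal is correct and is essentially the paper's own proof: the paper likewise simulates one copy of the DFA on the odd-indexed letters from $q_0$ and a second copy on the even-indexed letters from a guessed state, then verifies the guess at the end. Your version just spells out the bookkeeping (the parity bit, the acceptance condition $q_A = q_m$ and $q_B \in F$, and the odd-length case) that the paper leaves implicit.
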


\begin{proof}
On input $x$ we simulate the DFA for $L$ on the odd-indexed letters
of $x$, starting from $q_0$, and we simulate a second copy of
the DFA for $L$ on the even-indexed letters,  starting
at some guessed state $q$.  Finally, we check to see that our guess
of $q$ was correct.
\end{proof}

\begin{theorem}
The class of context-free languages is not closed under
$\bdi$.
\end{theorem}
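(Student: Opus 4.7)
The plan is to reuse the introduction's example that the perfect shuffle of the CFLs $L_3 = \{a^m b^{2m} : m \geq 1\}$ and $L_4 = \{a^{2n} b^n : n \geq 1\}$ is not context-free. I would take $L$ to be the even-length subset of the concatenation $L_3 L_4$, that is,
\[
L = \{a^m b^{2m} a^{2n} b^n : m, n \geq 1, \ m + n \text{ even}\}.
\]
Since $L_3 L_4$ is a concatenation of CFLs and the length-parity condition is regular, $L$ is context-free.

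The key observation is that when $m = n$, the word $w = a^m b^{2m} a^{2m} b^m$ has length $6m$ and splits evenly as $\fh(w) = a^m b^{2m}$ and $\lh(w) = a^{2m} b^m$. A direct computation of the perfect shuffle then gives $\bdi(w) = (a^m b^{2m}) \shap (a^{2m} b^m) = a^{2m}(ba)^m b^{2m}$, which is exactly the language $L_3 \shap L_4$ noted in the introduction as non-CF. I would intersect $\bdi(L)$ with the regular language $R = a^+(ba)^+ b^+$ and show that $\bdi(L) \cap R = \{a^{2m}(ba)^m b^{2m} : m \geq 1\}$; since this is not context-free and CFLs are closed under intersection with regular languages, $\bdi(L)$ must fail to be context-free.

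The main obstacle is verifying that for $m \neq n$, $\bdi(w) \notin R$. This calls for a case analysis on the relative sizes of $m$ and $n$: one determines where the midpoint of $w$ falls within the block decomposition of $w = a^m b^{2m} a^{2n} b^n$, and then computes $\fh(w) \shap \lh(w)$ piecewise on the regions delimited by the boundaries of $\fh(w)$ and $\lh(w)$. In each sub-case, the resulting word contains an $(ab)^k$-factor with $k \geq 1$---as a prefix when $m > n$, and as a suffix when $m < n$---which is incompatible with the template $a^+(ba)^+ b^+$ of $R$.
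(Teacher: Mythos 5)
Your construction is sound in outline and runs parallel to the paper's own proof, which likewise takes a concatenation-style CFL (there, $L = \lbrace 0^m 1^m 2^{2n} 3^{4n} : m,n \geq 1\rbrace$), computes $\bdi$ by cases according to where the midpoint of the word falls, and intersects with a regular template to force a non-context-free relation between $m$ and $n$. Your computation $\bdi(a^m b^{2m} a^{2m} b^m) = (a^m b^{2m}) \shap (a^{2m} b^m) = a^{2m}(ba)^m b^{2m}$ is correct, the target language $\lbrace a^{2m}(ba)^m b^{2m} : m \geq 1\rbrace$ is indeed not context-free, and your $L$ is a CFL. But your final step --- the exclusion of the $m \neq n$ words from $R = a^+(ba)^+b^+$ --- is justified by a false criterion. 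Containing, or even beginning or ending with, a factor $(ab)^k$ is \emph{not} incompatible with $R$: since $a(ba)^{k-1}b = (ab)^k$, the word $(ab)^k$ itself lies in $R$ for every $k \geq 2$, and more generally $a(ba)^k b^l$ has $(ab)^k$ as a prefix while $a^j(ba)^k b$ has $(ab)^k$ as a suffix. So the observation that $\bdi(w)$ starts (or ends) with such a factor proves nothing.

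The argument can be repaired, because your words are excluded from $R$ for a different reason: every word of $a^+(ba)^+b^+$ has maximal-block decomposition $a^j\, b\, a\, b \cdots a\, b^l$, i.e., every maximal block of identical letters other than the first and the last is a singleton. Carrying out the case analysis you describe (the sub-cases are governed by comparing $m$ with $3n$ when $m>n$, and $n$ with $3m$ when $m<n$), one finds for $m \neq n$ an \emph{internal} maximal block of length at least $2$; for instance, for $n < m < 3n$ one gets $\bdi(w) = (ab)^p\, a^{2(m-p)}\, (ba)^{(m+n)/2}\, b^{2n}$ with $p = (3m-3n)/2$, whose internal $a$-block has length $2(m-p) \geq 2$. (Your parity restriction on $m+n$ conveniently keeps these stray blocks of even, hence nonunit, length.) With that criterion in place the proof closes. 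Note that the paper sidesteps this delicacy entirely by working over a four-letter alphabet, where the letter pairs occurring in the shuffle are distinct across the cases and the regular template $(03)^+(13)^+(23)^+$ selects the relevant case unambiguously; that is what your two-letter version costs you.
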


\begin{proof}
Let $L = \lbrace 0^{m} 1^{m} 2^{2n} 3^{4n} \ : \ m,n \geq 1 \rbrace$.
It is easy to see that

$$\bdi(L) = \begin{cases}
(01)^{m-3n} (02)^{2n} (03)^{n} (13)^{3n}, & \text{ if $m \geq 3n$}; \\
(02)^{m-n} (03)^n (13)^m (23)^{3n-m}, & \text {if $n \leq m \leq 3n$}; \\
(03)^m (13)^m (23)^{2n} (33)^{n-m}, & \text{if $m \leq n$}.
\end{cases}
$$

Consider $L' := \bdi(L) \ \cap \ (03)^+ (13)^+ (23)^+$.
From the above we have $L' = \lbrace (03)^n (13)^n (23)^{2n} \ : \
n \geq 1 \rbrace$, which is evidently not context-free.
\end{proof}

\subsection{Inverse decimation with reverse}
\label{idwr}

Note that if $w = a_1 \cdots a_{2n}$
is of even length, then $\bdir( w) =
a_1 a_{2n} a_2 a_{2n-1} \cdots a_n a_{n+1}$. 
If $w = a_1 \cdots a_{2n+1}$ is of odd length,
we define 
$$\bdir(w) = a_1 a_{2n+1} a_2 a_{2n} \cdots a_n a_{n+2} a_{n+1}.$$

\begin{theorem}
If $L$ is regular then so is $\bdir (L)$.
\label{bdir}
\end{theorem}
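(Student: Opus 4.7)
The plan is to construct, from a DFA for $L$, an NFA for $\bdir(L)$ that runs two copies of the DFA on the input $y$: one advancing forward from the left end of the hypothetical preimage $w$, and the other running backward from the right end of $w$, with the two simulations meeting in the middle.

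First I would verify the combinatorial structure of $\bdir$. For $|w|=2n$, the odd-indexed letters of $y=\bdir(w)$ spell the first half $a_1 a_2 \cdots a_n$ of $w$ in order, and the even-indexed letters spell the reversed second half $a_{2n} a_{2n-1} \cdots a_{n+1}$. For $|w|=2n+1$, the odd-indexed letters of $y$ spell $a_1 \cdots a_n a_{n+1}$ (the central letter $a_{n+1}$ landing at the very last position of $y$), while the even-indexed letters spell $a_{2n+1} a_{2n} \cdots a_{n+2}$.

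Next, given a DFA $M=(Q,\Sigma,\delta,q_0,F)$ for $L$, I would construct an NFA $N$ with state set $Q \times Q \times \{\mathrm{fwd},\mathrm{bwd}\}$, whose third coordinate indicates whether the next input symbol is to be consumed by the forward copy or the backward copy; $N$ starts with flag $\mathrm{fwd}$ and toggles the flag after every letter. From $(p,q,\mathrm{fwd})$ on input $a$, $N$ moves deterministically to $(\delta(p,a),q,\mathrm{bwd})$; from $(p,q,\mathrm{bwd})$ on input $a$, $N$ moves nondeterministically to $(p,q',\mathrm{fwd})$ for each $q'$ with $\delta(q',a)=q$. The start states are $\{(q_0,q_f,\mathrm{fwd}) : q_f \in F\}$, and the accepting states are all those of the form $(p,p,\cdot)$, corresponding to the two simulations meeting in the middle.

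A straightforward induction then establishes the intended invariant: after $N$ consumes a prefix of $y$ and reaches state $(p,q,\cdot)$, the component $p$ is the state $M$ reaches on the corresponding prefix of $w$, while $q$ is a guessed state from which the corresponding suffix of $w$ drives $M$ into $F$. Hence the meeting condition $p=q$ at the end of the input certifies $w \in L$. I do not anticipate any serious obstacle; the only point needing genuine care is the odd-length case, since the input length and the terminating flag differ from the even-length case, but the uniform accept condition $p=q$ handles both cleanly, because in the odd case the central letter $a_{n+1}$ is simply the final forward step consumed before acceptance.
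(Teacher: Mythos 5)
Your proof is correct and follows essentially the same approach as the paper's: simulate the DFA forward on the odd-indexed letters, simulate the reversal NFA (started at the final states) on the even-indexed letters, and accept when the two meet in the middle. Your version merely makes the product construction and the odd-length bookkeeping explicit, which the paper leaves as a sketch.
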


\begin{proof}
On input $x$ we simulate the DFA $M$ for $L$ on the odd-indexed letters
of $x$, starting from $q_0$.  We also create an NFA $M'$ accepting $L^R$ in
the usual manner, by reversing the transitions of $M$, and making
the start state the set of final states of $M$,
and we simulate $M'$ on the even-indexed letters of $x$.
Finally, we check to see that we meet in the middle.
\end{proof}

\begin{theorem}
The class of context-free languages is not closed under $\bdir$.
\label{bdircf}
\end{theorem}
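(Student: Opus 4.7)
The plan is to produce a context-free language $L$ for which $\bdir(L)$ is not context-free, adapting the strategy already used for $\bdi$. I take
$$L = \{0^m 1^m 2^n 3^n : m, n \geq 1\},$$
which is the concatenation of two classical context-free languages and is hence itself context-free. Every $w \in L$ has even length $2(m+n)$, so $\bdir(w) = \fh(w) \shap \lh(w)^R$, and the output is obtained simply by interleaving $\fh(w)$ position by position with the reversal of $\lh(w)$.

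I would then work out $\bdir(w)$ by a case analysis on the relative sizes of $m$ and $n$, according to where the midpoint $m+n$ falls relative to the block boundaries at $m$, $2m$, and $2m+n$. A routine check of all five subcases ($m \geq 2n$;\ $n < m < 2n$;\ $m = n$;\ $m < n \leq 2m$;\ $2m < n$) shows that the only case producing a word whose digrams all lie in $\{03, 02, 01, 11\}$ is $m \geq 2n$, where
$$\bdir(w) = (03)^n (02)^n (01)^{m-2n} (11)^n.$$
All the other cases introduce a forbidden digram among $12, 13, 22, 23$. Consequently, setting $k = m - 2n \geq 0$,
$$\bdir(L) \cap (03)^+ (02)^+ (01)^* (11)^+ = \{\, (03)^n (02)^n (01)^k (11)^n : n \geq 1,\ k \geq 0 \,\}.$$

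To finish, I would show this intersection is not context-free. Define a homomorphism $h$ by $h(a) = 03$, $h(b) = 02$, $h(c) = 01$, $h(d) = 11$; since its $2$-letter images allow unambiguous left-to-right parsing, the inverse image of the intersection under $h$ is exactly $\{a^n b^n c^k d^n : n \geq 1,\ k \geq 0\}$, and intersecting this with the regular language $a^* b^* d^*$ yields the classically non-context-free $\{a^n b^n d^n : n \geq 1\}$. Because the CFLs are closed under inverse morphism and intersection with regular sets, this contradicts any supposition that $\bdir(L)$ is context-free. The only tedious step is the case analysis of $\bdir(w)$; once that is done, everything else is a mechanical application of standard closure properties.
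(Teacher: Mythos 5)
Your proof is correct and follows essentially the same strategy as the paper's: choose a CFL of the form $\{0^{\ast}1^{\ast}2^n3^n\}$, compute $\bdir$ by cases on where the midpoint falls, and intersect with a regular language of digram blocks to isolate a non-context-free slice (the paper uses $L = \{0^{2m}1^{4m}2^n3^n\}$ and intersects with $(03)^+(13)^+(22)^+$, while you use $\{0^m1^m2^n3^n\}$ and $(03)^+(02)^+(01)^*(11)^+$ followed by an inverse morphism). Your case analysis and the resulting language $\{(03)^n(02)^n(01)^k(11)^n : n \geq 1, k \geq 0\}$ check out, so the argument is sound; it is merely a different instantiation of the same template.
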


\begin{proof}
Consider $L = \lbrace 0^{2m} 1^{4m} 2^n 3^n \ : \ m, n \geq 1 \rbrace$.
Then $L$ is a CFL, and it is easy to verify that
$$
\bdir( 0^{2m} 1^{4m} 2^n 3^n ) =
\begin{cases}
(03)^n (02)^n (01)^{2m-2n} (11)^{m+n}, \text{if $m \geq n$;} \\
(03)^n (02)^{2m-n} (12)^{2n-2m} (11)^{3m-n}, \text{if $m \leq n \leq 2m$;} \\
(03)^{2m} (13)^{n-2m} (12)^n (11)^{3m-n}, \text{if $2m \leq n \leq 3m$;} \\
(03)^{2m} (13)^{n-2m} (12)^{6m-n} (22)^{n-3m}, \text{if $3m \leq n \leq 6m$;} \\
(03)^{2m} (13)^{4m} (23)^{n-6m} (22)^{3m}, \text{if $n \geq 6m$.}
\end{cases}
$$
Assume $\bdir( L)$ is a CFL. Then
$L' := \bdir(L) \ \cap \ (03)^+ (13)^+ (22)^+ $ is a CFL, and from
above we have $L' = \lbrace (03)^{2m} (13)^{4m} (22)^{3m} \ : \ m \geq 1
\rbrace$, which is not a CFL.
\end{proof}

As Georg Zetzsche has kindly pointed out to us, the operation
$\bdir$ was studied previously by Jantzen and Petersen
\cite{Jantzen&Petersen:1994}; they called it ``twist''.
They proved our Theorems~\ref{bdir} and \ref{bdircf}.

We now return to the proof of Theorem~\ref{pssr}, which was postponed until
now.  We need two lemmas:

\begin{lemma}
Suppose $L$ is a regular language.  Then
$L' = \lbrace w w^R \ : \ w \in L \rbrace$ is a CFL.
\label{ww}
\end{lemma}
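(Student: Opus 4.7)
The plan is to construct a PDA for $L'$ directly from a DFA for $L$. Let $M = (Q, \Sigma, \delta, q_0, F)$ be a DFA for $L$. I would design a PDA $P$ that operates in two phases, connected by a single nondeterministic $\epsilon$-transition. In the \emph{push} phase, $P$ simulates $M$ on the prefix read so far while also pushing each input letter onto the stack. Whenever the simulated $M$-state lies in $F$, $P$ may nondeterministically switch to the \emph{pop} phase, in which each input symbol is compared against (and simultaneously pops) the top of the stack, with any mismatch causing rejection. $P$ accepts when the input is exhausted and the stack is empty.

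For correctness, suppose $z = w w^R$ with $w \in L$. Then $P$ can guess to switch phases exactly after reading $w$: at that moment the simulated $M$-state lies in $F$, and the stack holds the letters of $w$ with the last-read on top, which matches $w^R$ symbol-by-symbol during the pop phase. Conversely, any accepting run of $P$ partitions the input as $z = z_1 z_2$, where $z_1$ is consumed in the push phase and $z_2$ in the pop phase. The enabling condition on the phase switch forces $z_1 \in L$, and the LIFO behavior of the stack together with acceptance on empty stack forces $z_2 = z_1^R$, so $z \in L'$.

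The construction is fairly standard and no substantive obstacle arises. The key subtlety is simply the gating condition on the phase transition: insisting that the push-to-pop switch occurs only when $M$ is in an accepting state is what ensures that the midpoint of the palindrome $z = z_1 z_1^R$ coincides with the end of a word in $L$, rather than with an arbitrary position in $\Sigma^*$. I note that a pure closure-properties argument, such as intersecting the even-length palindromes with $L \cdot L^R$, appears insufficient, since the concatenation $L \cdot L^R$ admits splits other than at the midpoint; this is why I would opt for the direct PDA construction.
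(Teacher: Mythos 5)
Your construction is correct and is essentially the paper's own proof: a PDA that reads and pushes a prefix while verifying membership in $L$, nondeterministically guesses the midpoint (gated on an accepting state), and then pops the stack to match the remaining input against the reverse. The paper states this in two sentences; your version merely spells out the same construction and its correctness argument in more detail.
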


\begin{proof}
On input $x$, a PDA can guess $w$ and verify it is in $L$,
while pushing it on the
stack.  Nondeterministically it then guesses it is at the end of $w$ and pops
the stack, comparing to the input.
\end{proof}

\begin{lemma}
For all words $w$ we have $w \shap w^R = \bdir(w) \, \bdir(w)^R$.
\label{bdir2}
\end{lemma}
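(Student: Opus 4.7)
The plan is to derive the equality from a single structural observation: $w \shap w^R$ is always a palindrome. Indeed, by the identity $(x \shap y)^R = y^R \shap x^R$ noted in the introduction, applied with $x = w$ and $y = w^R$, one obtains $(w \shap w^R)^R = (w^R)^R \shap w^R = w \shap w^R$. Since $|w \shap w^R| = 2|w|$ is always even, and any palindrome $p$ of even length factors uniquely as $u \cdot u^R$ with $u$ its prefix of length $|p|/2$, we may write $w \shap w^R = u \cdot u^R$ where $|u| = |w|$.

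It then suffices to identify $u$ with $\bdir(w)$. Writing $w = a_1 \cdots a_n$, so that the $i$-th letter of $w^R$ is $a_{n+1-i}$, the definition of perfect shuffle gives
\[
 w \shap w^R = a_1 a_n a_2 a_{n-1} a_3 a_{n-2} \cdots a_n a_1 .
\]
I would extract the first $n$ letters of this string and match them against the definition of $\bdir$, splitting into cases according to the parity of $n$. When $n = 2m$, the length-$n$ prefix is $a_1 a_{2m} a_2 a_{2m-1} \cdots a_m a_{m+1}$, which is the even-length formula for $\bdir$; when $n = 2m+1$, the length-$n$ prefix is $a_1 a_{2m+1} a_2 a_{2m} \cdots a_m a_{m+2} a_{m+1}$, which is the odd-length formula for $\bdir$.

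The only real labor is this parity bookkeeping, and there is no substantive obstacle. All the conceptual content is captured by the palindrome observation; the identification of $u$ with $\bdir(w)$ is then forced by how both operations prescribe reading $w$ ``outside-in.''
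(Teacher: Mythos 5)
Your proof is correct, but it takes a different route from the paper's. The paper writes $w = \fh(w)\,\lh(w)$ and uses the fact that a perfect shuffle distributes over such a concatenation: $w \shap w^R = (\fh(w)\lh(w)) \shap (\lh(w)^R \fh(w)^R) = (\fh(w) \shap \lh(w)^R)(\lh(w) \shap \fh(w)^R)$, recognizing the first factor as $\bdir(w)$ by definition and the second as $\bdir(w)^R$ via the identity $(x \shap y)^R = y^R \shap x^R$. You use that same reversal identity differently -- to show $w \shap w^R$ is a palindrome of even length, hence of the form $uu^R$ -- and then pin down $u$ by writing out the letters of the prefix and matching them to the definition of $\bdir$ in each parity case. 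Both arguments must handle the odd-length case separately (the paper's because the two halves then have lengths differing by one, so the split of the shuffle uses the extended definition of $\shap$; yours because the odd-length formula for $\bdir$ has a dangling middle letter), and your index computation there is right: the middle pair of $w \shap w^R$ is $a_{m+1}a_{m+1}$, so the length-$n$ prefix ends in $a_{m+1}$ as the odd-length definition of $\bdir$ requires. What the paper's decomposition buys is that both halves of the conclusion drop out simultaneously from one algebraic manipulation; what yours buys is that only the first half needs to be computed explicitly, the second being forced by palindromicity.
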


\begin{proof}
If $w$ is of even length then
\begin{eqnarray*}
w \shap w^R & = & (\fh(w) \lh(w)) \shap (\fh(w) \lh(w))^R \\
&=& (\fh(w) \lh(w)) \shap (\lh(w)^R \fh(w)^R) \\
&=& (\fh(w) \shap \lh(w)^R) (\lh(w) \shap \fh(w)^R) \\
&=& \bdir(w) \bdir(w)^R.
\end{eqnarray*}
A similar proof works for $w$ of odd length.
\end{proof}

We can now prove Theorem~\ref{pssr}.

\begin{proof}
From Lemma~\ref{bdir2} we have
$$\pssr(L) = \bigcup_{x \in L} x \shap x^R
= \bigcup_{x \in L} \bdir(x) \, \bdir(x)^R  = 
\bigcup_{x \in \bdir(L)} x x^R.$$
If $L$ is regular, then $\bdir(L)$ is regular, by
Theorem~\ref{bdir}.    Then, from Lemma~\ref{ww}, it follows
that $\pssr(L)$ is a CFL.
\end{proof}

%
%
%
%
%
%
%
%
%

\section{Acknowledgment}

We are grateful to Georg Zetzsche for his remarks.

\end{document}